\documentclass{ifacconf}
\usepackage{cite}
\usepackage{amsmath,amssymb,amsfonts}
\usepackage{graphicx}
\usepackage{natbib}        
\usepackage{mathtools}
\usepackage{bm}
\usepackage{mathrsfs}
\usepackage{kantlipsum}
\usepackage[shortlabels]{enumitem}

\begin{document}
\begin{frontmatter}

\title{Geometric Fault Identification via Mirror Descent Learning\thanksref{footnoteinfo}} 

\thanks[footnoteinfo]{This research is funded in part by the Technology Innovation Institute and the Defense Advanced Research Projects Agency (Learning Introspective Control).}

\author[First]{Mahdi Taheri} 
\author[First]{Haeyoon Han} 
\author[First]{Soon-Jo Chung}
\author[First]{Fred Y. Hadaegh}

\address[First]{Division of Engineering and Applied Science, California Institute of Technology, Pasadena, CA 91125 USA (e-mail: mtaheri, hhan3, sjchung, hadaegh\}@caltech.edu).}

\begin{abstract}                
This paper develops a fault detection and identification (FDI) method for nonlinear control-affine systems under simultaneous actuator and sensor faults. We adopt a geometric approach to study the isolability of faults in the sense of the principal angles between subspaces corresponding to each actuator and sensor fault. As for the fault identification, a hybrid estimator that consists of a Luenberger-like observer with contraction guarantees is developed. Moreover, neural networks are embedded in the mentioned observer to estimate actuator and sensor faults. Considering that the training dataset for neural networks cannot be representative of every fault scenario, the last layer of each network is adapted using mirror descent-based laws. The mirror descent-based adaptive laws impose isolability conditions for fault channels and do not assume a quadratic parameter estimation space to consider the geometry of the fault subspaces. A Lyapunov-based analysis establishes that the state and parameter estimation errors are uniformly ultimately bounded. The effectiveness of our proposed FDI method is illustrated on the $3$-axis attitude control system of a spacecraft.\end{abstract}

 \begin{keyword}
 Fault detection and identification, mirror descent, data-driven, learning-based control, nonlinear systems.
 \end{keyword}

\end{frontmatter}

\section{Introduction}
Fault detection and identification (FDI) is a fundamental capability for autonomous systems, which ensures reliable performance in the presence of unexpected component failures. A desirable FDI system should detect deviations from nominal behavior and identify faults by isolating the fault type and estimating its severity. In particular, on-board FDI is critical for actuator and sensor faults, which have a cascading impact on a fast time scale and may trigger catastrophic failures if left unmitigated (\cite{10.1145/3146389}). For this reason, on-board FDI for actuators and sensors has become a vital topic across various domains, including autonomous vehicles (\cite{mohamed2018literature}), aerospace systems (\cite{10.1007/978-981-97-7094-6_3}), and industrial robotics (\cite{khan2025fault}). 

Existing FDI methods for autonomous systems are broadly divided into three categories: model-based, data-driven, and hybrid approaches that integrate prior knowledge of the system model and adaptive capabilities learned from data. Model-based FDI relies on mathematical physics-based models that capture system dynamics and estimate faults by generating residuals between the predicted nominal and measured system output (\cite{depersis2001geometric}). However, uncertainty in system dynamics (e.g., disturbances, unmodeled dynamics) makes model-based FDI challenging. To address this issue and improve the accuracy of the FDI, robust fault observers (\cite{melendez-useros_novel_2025, moradmand2020a}), information-gathering tree search with marginalized filtering (\cite{ragan_online_2024}), and a bank of estimators (\cite{995036}) have been developed.

Although providing structural ways for fault diagnosis, model-based methods require accurate analytical models, which may be unavailable for complex, nonlinear systems. In such settings, data-driven approaches offer alternatives by leveraging past faulty and healthy historical datasets to learn fault signatures. Recent works on data-driven FDI have demonstrated that neural networks trained with input-output data of autonomous systems can perform fault detection (\cite{chen_data-driven_2022}), isolation (\cite{kumar_recurrent_2023}), and identification (\cite{elhoseny_deep_2024}).

\begin{figure*}[thpb]
\begin{center}
\centering    \includegraphics[width=0.89\textwidth, clip,keepaspectratio]{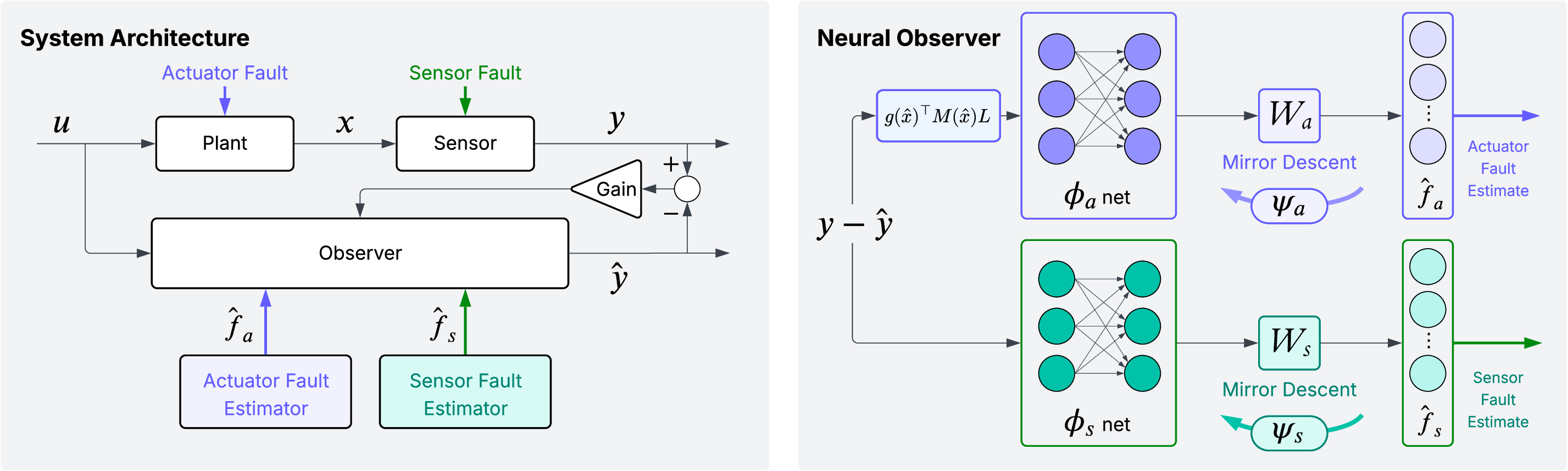}
\caption{Overall architecture of the proposed fault detection and identification scheme. A contraction-based neural observer with deep neural networks, trained offline to estimate actuator and sensor faults, and their last layers are adapted online via mirror descent, so that the learned fault estimates geometrically isolable fault directions.}\label{fig:1}
\label{fig:diagram}
\end{center}
\vspace{-2mm}
\end{figure*}

This paper presents a hybrid approach for FDI designed for autonomous systems. Hybrid approaches retain the interpretability and structure of model-based reasoning, while adapting the uncertain system parameters through data-driven inference. Prior works have explored hybrid FDI methods:~\cite{alessandri2003fault} used a bank of estimators in which each estimator is a neural network approximating the optimal model for describing the system and failures. Similar multi-model based methods were further developed in~\cite{995036, sobhani-tehrani_hybrid_2014}. Another hybrid approach was introduced by \cite{talebi_recurrent_2009}, which utilizes neural network-based residual estimation to prevent the cascading failures. A more advanced learning-based fault estimation was proposed by~\cite{oconnell_learning-based_2024}, which estimates the actuator fault from the system's dynamic response, where the influences of actuator fault and unmodeled dynamics are coupled, supporting a fault-tolerant controller. When the system model's parameters are unknown, hybrid approaches may employ data-driven system identification, and a residual generator can then be designed based on the updated model structure, as shown in~\cite{ran_hybrid_2023}.

The fault observer developed in this paper uses composite adaptation: a deep neural network learns fault dynamics representation from fault data, and the last layer allows the fault parameter estimator to be updated via mirror descent. While fault observers in conventional data-driven approaches commonly adopt Euclidean distances to represent estimation error and update parameters via gradient descent, mirror descent employed in this paper uses Bregman divergence to quantify the closeness between parameter estimates. This enables the adaptation law to consider and exploit the geometry of the fault signature subspaces. The advantage of using mirror descent has been shown in control of autonomous systems with regularization: \cite{fradkov_lyapunov-bregman_2022} designed nonlinear adaptive control laws with Bregman divergence to preserve Riemannian geometry, and \cite{9812089} achieved efficient online learning for Model Predictive Control using dynamic mirror descent.

The contributions of this paper are summarized as follows:
  \begin{enumerate}[wide, labelwidth=!, labelindent=0pt]
  \item  We develop a geometric framework in which actuator and sensor faults are represented as fault signature subspaces in the output space of a nonlinear system, and fault isolability is described in terms of principal angles between these subspaces. This framework accounts for simultaneous actuator and sensor fault scenarios.
\item  A contraction-based Luenberger-like observer augmented with neural network estimators for actuator and sensor faults is designed, where only the last layer is adapted online using mirror descent laws that are derived from the Bregman divergence with an elastic net potential function. This design allows the parameter updates to exploit the geometry of the fault signature subspaces and bias adaptation along geometrically isolable directions without assuming a purely quadratic parameter space.
\item  We derive uniform ultimate boundedness (UUB) stability of the designed adaptive observer. Accounting for Bregman divergence used for the parameter update, the stability analysis demonstrates UUB with exponential convergence for the state estimation.
  \end{enumerate}

\section{Problem Formulation}
\subsection{Notation and Preliminaries}
Let $\|\cdot\|$ and $\langle \cdot,\cdot\rangle$ denote the Euclidean norm and inner product. For a matrix $A$, $\mathrm{Im}\,A$ denotes its column space and $A^\top$ its transpose. Let $x\in\mathbb{R}^n$ denote a state variable. For a smooth function $h:\mathbb{R}^n~\to~\mathbb{R}^p$ and the vector field $v$, the Lie derivative of $h$ along $v$ is $L_v h(x) := \frac{\partial h}{\partial x}(x)\,v(x) \in \mathbb{R}^p$. Higher-order Lie derivatives are defined recursively as $L_v^0 h := h$ and $L_v^{k+1} h := L_v(L_v^k h)$. For a smooth map $F$, the Jacobian with respect to $x$ is written as $\nabla_x F(x)$.

\subsection{System Model}
Consider a nonlinear control-affine system expressed by
\begin{align}\label{e:sys}
		\dot x &= f(x) + \sum_{k=1}^m g_k(x)u_k+ \sum_{i=1}^{q_\mathrm{a}} g^\mathrm{f}_i(x) f_{\mathrm{a},i}(t) + d(t), \nonumber \\
		y &= h(x) + \sum_{j=1}^{q_\mathrm{s}} e_j f_{\mathrm{s},j}(t),
\end{align}
where $x\in M\subset\mathbb{R}^n$ is the state on a smooth manifold $M$, $u =[u_1,\dots,u_m]^\top\in\mathbb{R}^m$ is the control input associated with $m$ actuators, $y\in\mathbb{R}^p$ is the measured output, $d(t)\in\mathbb{R}^n$ is a bounded disturbance (i.e., $\|d(t)\|<\bar{d}$), and $f_{\mathrm{a},i}(t)\in\mathbb{R}$ and $f_{\mathrm{s},j}(t)\in~\mathbb{R}$ are actuator and sensor fault channels (i.e., signals), respectively, for $i=1,\dots,q_a$ and $j=1,\dots,q_s$. It should be noted that we do not impose a specific functional form on the fault signals $f_{\mathrm{a},i}(t)$ and $f_{\mathrm{s},j}(t)$. Moreover, $g_i^\mathrm{f}$ and $e_j$ denote the actuator and sensor fault signatures, respectively. The index $i$ enumerates fault channels, not physical actuators. The assignment from physical components to fault channels is encoded by the choice of $g_i^\mathrm{f}$ and $e_j$. For example, if the first actuator undergoes a loss of effectiveness fault, this fault is represented by selecting some channel $i$ and setting $g_i^\mathrm{f}(x)=g_1(x)$, with $f_{\mathrm{a},i}(t)$ acting as the corresponding fault signal. Sensor faults are encoded analogously, with the nonzero rows of $e_j$ indicating which physical sensors are affected by the $j$-th channel. We assume that $f,g_j,g_i^\mathrm{f}$, and $h$ are smooth maps of appropriate dimensions.

\subsection{Problem Statement}
We study and address three tightly coupled problems. First, we investigate conditions under which each actuator and sensor fault channel $f_{\mathrm{a},i}(t)$ and $f_{\mathrm{s},j}(t)$ is structurally isolable in the sense that its effect on the measured outputs can be distinguished from the nominal dynamics and from other concurrent faults. This is formulated by adopting a differential-geometric approach, where fault effects are represented as subspaces of the linearized output differential map associated with~\eqref{e:sys} (made precise in Section~\ref{s:subspaces}). The isolability between faults is discussed by means of principal angles between the mentioned subspaces. Second, in addition to this geometric characterization, we seek to design a nonlinear observer that incorporates neural network-based fault estimators. In particular, the last layers of the neural network estimators are adapted in real-time via mirror descent laws derived from a suitable Bregman divergence. Third, we aim to derive Lyapunov–type guarantees to show UUB of both state and parameter errors, and to make explicit how the geometric principal angle conditions enter these stability and isolation properties.

\section{Fault Signature Subspaces and Principal Angle Isolability}\label{s:subspaces}
We need to derive a differential-geometric representation of fault signatures in an output differential map. This map is used to separate nominal and fault-induced directions and study the conditions under which faults can be isolated from one another.

\subsection{Relative Degree and Output Differential Maps}
Let $d(t)=0$ in \eqref{e:sys}. For each output $y_\ell = h_\ell(x)$, where $y_\ell$ denotes the $\ell$-th row of $y$, we define a relative degree $r_\ell \ge 1$ as the smallest integer such that a component of the control input $u$ appears explicitly in $\tfrac{d^{r_\ell}}{dt^{r_\ell}} h_\ell(x)$, i.e., the smallest $r_\ell$ such that
\begin{equation*}
	L_{g_k} L_f^{r_\ell-1} h_\ell(x) \neq 0,
\end{equation*}
for at least one $k\in\{1,\dots,m\}$. We assume all relative degrees are finite and define
\begin{equation}\label{e:R}
	R \;\ge\; \max_{\ell=1,\dots,p} r_\ell.
\end{equation}

The output differential map of order $R$ is the stacked vector
\begin{equation}\label{e:diffMap}
	y^{[0:R]}(x,f_\mathrm{a},f_\mathrm{s}) := ( y,\ \dot y,\ \dots,\ y^{(R)})\in \mathbb{R}^{p(R+1)},
\end{equation}
where $f_\mathrm{a}=(f_{\mathrm{a},1},\dots,f_{\mathrm{a},q_\mathrm{a}})^\top$ and $f_\mathrm{s}=(f_{\mathrm{s},1},\dots,f_{\mathrm{s},q_\mathrm{s}})^\top$. 

\begin{rem}\label{rem:reldeg}
We adopt a per-output relative degree rather than the vector relative degree to identify the smallest order of differentiation of each output at which control and fault-induced directions appear. The Lie derivative condition $L_{g_k}L_f^{r_\ell-1}h_\ell(x)\neq 0$ is required to hold on an open neighborhood of the operating point $x^*$ used in Section~\ref{s:subspaces}, which is generic for systems analytic in $x$. The integer $R\ge \max_\ell r_\ell$ in~\eqref{e:R} is used solely to obtain a single, uniform order for the stacked map~\eqref{e:diffMap}. 
\end{rem}

Each derivative $y^{(k)}(t)$ can be written as a finite combination of Lie derivatives of $h$ along $f$, the control vector fields $g_k$, and, when faults are present, the fault vector fields $g^f_i$.

\begin{rem}\label{rem:diffOutputMap}
	It should be noted that we utilize the differential map \eqref{e:diffMap} as a local, instantaneous representation of how the state and each fault channel affect the outputs. However, the observer and neural network-based fault estimators do not implement explicit time differentiation of $y$. The map \eqref{e:diffMap} is only used for theoretical analysis of fault signatures to formally define fault subspaces and their isolability conditions.
\end{rem}
\subsection{Output Differential Map and Fault Subspaces}
Throughout this section, $x^*$ denotes an instantaneous evaluation $x^*(t)$ along a fault-free trajectory of~\eqref{e:sys} at a fixed time $t$. We now study the first-order variation of $y^{[0:R]}(x,f_\mathrm{a},f_\mathrm{s})$ around a nominal fault-free trajectory $\big(x^*(t),u^*(t)\big)$. Consider small perturbations $(\delta x,\delta f_a,\delta f_s)$ around $(x^*,0,0)$. 
\begin{align}\label{eq:delta_diffMap}
	\delta y^{[0:R]}\!=\! C_R(x^*\!)\,\delta x \!+\!\! \sum_{i=1}^{q_\mathrm{a}}\! F_{\mathrm{a},i}^{[0:R]}(x^*\!)\,\delta f_{\mathrm{a},i} 
	\!+\!\! \sum_{j=1}^{q_s}\! E_{s,j}^{[0:R]}(x^*\!)\,\delta f_{s,j},
\end{align}
where $C_R(x^*)\in\mathbb{R}^{p(R+1)\times n}$ is the Jacobian of the fault-free output differential map \eqref{e:diffMap} with respect to $x$, $F_{\mathrm{a},i}^{[0:R]}(x^*)\in~\mathbb{R}^{p(R+1)\times 1}$ is the partial derivatives of $y^{[0:R]}$ with respect to the actuator fault signal at $f_{\mathrm{a},i}=0$, and $E_{\mathrm{s},j}^{[0:R]}(x^*)\in~\mathbb{R}^{p(R+1)\times 1}$ is the partial derivative with respect to $f_{\mathrm{s},j}$ and at $f_{\mathrm{s},j}=0$. We now introduce the subspaces spanned by the columns of these matrices in the output differential space.

\begin{defn}[Fault Signature Subspaces]\label{def:subspaces}
	At $x^*\in~\mathbb{R}^n$, we define the subspaces
		\begin{align*}
			\Delta_\mathrm{d}^{[0:R]}(x^*)&:= \mathrm{Im}\,C_R(x^*), \ \ \mathcal{F}_{\mathrm{a},i}^{[0:R]}(x^*)	:= \mathrm{Im}\,F_{\mathrm{a},i}^{[0:R]}(x^*),\\
						\mathcal{E}_{\mathrm{s},j}^{[0:R]}(x^*)	&:= \mathrm{Im}\,E_{\mathrm{s},j}^{[0:R]}(x^*), \quad i,j=1,\dots,q_\mathrm{s}.
		\end{align*}
\end{defn}

By construction, $\Delta_\mathrm{d}^{[0:R]}$ contains only the directions generated by the fault-free dynamics, while actuator and sensor faults reside in $\mathcal{F}_{\mathrm{a},i}^{[0:R]}$ and $\mathcal{E}_{\mathrm{s},j}^{[0:R]}$, respectively. Consequently, for each fault channel, one can define residue subspaces as the span of all other signatures.

\begin{defn}[Residue Subspaces]\label{def:residue}
	For an actuator fault channel $i\in\{1,\dots,q_\mathrm{a}\}$, we define
	\begin{equation*}
		\Pi_i^{[0:R]}(x^*):= \mathcal{F}_{\mathrm{a},-i}^{[0:R]}(x^*) \oplus  \bar{\mathcal{E}}_{\mathrm{s}}^{[0:R]}(x^*),
	\end{equation*}
	where $\oplus$ denotes the algebraic sum of subspaces, $\mathcal{F}_{\mathrm{a},-i}^{[0:R]}(x^*)= \bigoplus_{\substack{i_\mathrm{a}=1 \\ i_\mathrm{a}\neq i}}^{q_{\mathrm{a}}}\mathcal{F}_{\mathrm{a},i_\mathrm{a}}^{[0:R]}(x^*)$, and $\bar{\mathcal{E}}_{\mathrm{s}}^{[0:R]}(x^*) = \bigoplus_{j=1}^{q_\mathrm{s}} \mathcal{E}_{\mathrm{s},j}^{[0:R]}(x^*)$. Moreover, for a sensor fault channel $j\in\{1,\dots,q_\mathrm{s}\}$, we define
	\begin{equation}
		\Sigma_j^{[0:R]}(x^*):= \bar{\mathcal{F}}_{\mathrm{a}}^{[0:R]}(x^*) \oplus \mathcal{E}_{\mathrm{s},-j}^{[0:R]}(x^*),
	\end{equation}
	where $\bar{\mathcal{F}}_{\mathrm{a}}^{[0:R]}(x^*)= \bigoplus_{i=1}^{q_\mathrm{a}} \mathcal{F}_{\mathrm{a},i}^{[0:R]}(x^*)$ and $\mathcal{E}_{\mathrm{s},-j}^{[0:R]}(x^*)= \bigoplus_{\substack{j_\mathrm{s}=1 \\ j_\mathrm{s}\neq j}}^{q_{\mathrm{s}}}\mathcal{E}_{\mathrm{s},j_\mathrm{s}}^{[0:R]}(x^*)$.
\end{defn}

By definition, $\Pi_i^{[0:R]}(x^*)$ excludes the target actuator subspace $\mathcal{F}_{\mathrm{a},i}^{[0:R]}(x^*)$, and similarly, $\Sigma_j^{[0:R]}(x^*)$ excludes the target sensor subspace $\mathcal{E}_{\mathrm{s},j}^{[0:R]}(x^*)$. These objects will be used below to formulate necessary and sufficient fault isolability conditions in terms of principal angles and intersections between subspaces.

\subsection{Principal Angle Isolability in Output Map Space}
We now utilize the fault subspaces introduced in Definition~\ref{def:residue} to derive necessary and sufficient geometric conditions on the isolability of fault channels based on principal angles between their subspaces. Let $U,V\subset\mathbb{R}^N$ be two subspaces with orthonormal bases $Q_\mathrm{U}\in\mathbb{R}^{N\times r_U}$ and $Q_\mathrm{V}\in\mathbb{R}^{N\times r_V}$, respectively. The principal angles $\theta_1~\le~ \dots\le \theta_{r_\mathrm{min}}$, where $r_\mathrm{min}:=\min\{r_U,r_V\}$, are defined via the singular values of $Q_\mathrm{U}^\top Q_\mathrm{V}$, where $Q_\mathrm{U}^\top Q_\mathrm{V} = U \mathrm{S} V^\top$, $\mathrm{S} = \mathrm{diag}(\sigma_1,\dots,\sigma_{r_\mathrm{min}})$, with $\sigma_k = \cos\theta_k\in[0,1]$ and $\theta_k\in[0,\tfrac{\pi}{2}]$, for $k=1,\dots,r_\mathrm{min}$. We denote $\theta_{\min}(U,V):=\theta_1$, where $\cos\theta_{\min}(U,V)=\sigma_1$. It can be simply seen that $U\cap V = \{0\}$, if and only if $\theta_{\min}(U,V) > 0$.

In order to study the principal angle between fault subspaces, we need to first define fault isolability among various fault channels.
\begin{defn}[Fault Isolability]\label{def:isolability}
	Actuator channel $i$ is structurally isolable at $x^*$ if there exists a matrix	$H_{\mathrm{a},i}^{[0:R]} \in \mathbb{R}^{\ell_i\times p(R+1)}$ such that:
	\begin{enumerate}[(a)]
		\item For every $v\in \Pi_i^{[0:R]}(x^*)$, $H_{\mathrm{a},i}^{[0:R]} v = 0$ .
		\item The restriction of $H_{\mathrm{a},i}^{[0:R]}$ to $\mathcal{F}_{\mathrm{a},i}^{[0:R]}(x^*)$ is injective, i.e.,
		\begin{equation*}
			\mathrm{rank}\big(H_{\mathrm{a},i}^{[0:R]} F_{\mathrm{a},i}^{[0:R]}(x^*)\big)= \dim \mathcal{F}_{\mathrm{a},i}^{[0:R]}(x^*).
		\end{equation*}
	\end{enumerate}
	In a similar manner, the $j$-th sensor channel is isolable if there exists $H_{\mathrm{s},j}^{[0:R]}$ such that 
	$H_{\mathrm{s},j}^{[0:R]} v=0$ for all $v\in\Sigma_j^{[0:R]}(x^*)$ and $\mathrm{rank}\big(H_{\mathrm{s},j}^{[0:R]} E_{\mathrm{s},j}^{[0:R]}(x^*)\big) = \dim \mathcal{E}_{\mathrm{s},j}^{[0:R]}(x^*)$.
\end{defn}

We now state the main result of this section.

\begin{thm}\label{th:isolability}
	Let $x=x^*$ and consider an integer $R$ as in \eqref{e:diffMap}. The following can be stated regarding the isolability of actuator and sensor faults:
	\begin{enumerate}[(a)]
		\item Actuator channel $i$ is isolable in the sense of Definition~\ref{def:isolability} if and only if
		\begin{equation}\label{e:th_act_isolability}
			\mathcal{F}_{\mathrm{a},i}^{[0:R]}(x^*)\ \cap\ \Pi_i^{[0:R]}(x^*) = \{0\}.
		\end{equation}
		\item Sensor channel $j$ is isolable if and only if
		\begin{equation}\label{e:th_sen_intersection}
			\mathcal{E}_{\mathrm{s},j}^{[0:R]}(x^*)\ \cap\ \Sigma_j^{[0:R]}(x^*) = \{0\}.
		\end{equation}
	\end{enumerate}
	Equivalently, isolability holds if and only if
	\begin{equation}\label{e:th_angle_isolability}
		\theta_{\min}\big(\mathcal{F}_{\mathrm{a},i}^{[0:R]},\Pi_i^{[0:R]}\big) > 0,\quad
		\theta_{\min}\big(\mathcal{E}_{\mathrm{s},j}^{[0:R]},\Sigma_j^{[0:R]}\big) > 0.
	\end{equation}
\end{thm}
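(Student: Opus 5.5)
This is pure linear algebra in $\mathbb{R}^{N}$ with $N=p(R+1)$; the dynamics enter only through the fixed matrices $F_{\mathrm{a},i}^{[0:R]}(x^*)$ and $E_{\mathrm{s},j}^{[0:R]}(x^*)$. We treat part (a) in detail; part (b) follows by the substitution $\mathcal{F}_{\mathrm{a},i}\to\mathcal{E}_{\mathrm{s},j}$, $\Pi_i\to\Sigma_j$. Abbreviate $\mathcal{F}:=\mathcal{F}_{\mathrm{a},i}^{[0:R]}(x^*)$ and $\Pi:=\Pi_i^{[0:R]}(x^*)$.

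\emph{Necessity.} Suppose $H:=H_{\mathrm{a},i}^{[0:R]}$ satisfies (a) and (b) of Definition~\ref{def:isolability}, and take $v\in\mathcal{F}\cap\Pi$. Since $v\in\Pi$, condition (a) gives $Hv=0$. Condition (b) says the rank of $HF_{\mathrm{a},i}^{[0:R]}$ equals $\dim\mathcal{F}$. Because $\mathrm{Im}\,F_{\mathrm{a},i}^{[0:R]}=\mathcal{F}$, this is equivalent to $\ker H\cap\mathcal{F}=\{0\}$, i.e.\ $H|_{\mathcal{F}}$ is injective. (This holds regardless of whether the column $F_{\mathrm{a},i}^{[0:R]}$ vanishes, since $\dim\mathcal{F}$ is then $0$.) Hence $v=0$ and \eqref{e:th_act_isolability} holds.

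\emph{Sufficiency.} Assume $\mathcal{F}\cap\Pi=\{0\}$ and construct $H$ explicitly. Take $H:=P_{\Pi^\perp}$, the orthogonal projector onto $\Pi^\perp$ (equivalently $Q^\top$ for an orthonormal basis $Q$ of $\Pi^\perp$, so $\ell_i=N-\dim\Pi$). Then $Hv=0$ for every $v\in\Pi$, which is (a). If $w\in\mathcal{F}$ and $Hw=0$, then $w\in\ker P_{\Pi^\perp}=\Pi$, so $w\in\mathcal{F}\cap\Pi=\{0\}$. Thus $\ker H\cap\mathcal{F}=\{0\}$, which gives the rank identity in (b). This equivalence between injectivity on $\mathcal{F}$ and the rank condition, including the degenerate case $\dim\mathcal{F}=0$, is the only point needing care; it is rank--nullity applied to $H$ restricted to $\mathcal{F}$.

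\emph{Angle characterization.} It remains to show $U\cap V=\{0\}$ if and only if $\theta_{\min}(U,V)>0$, which the paper states just before Definition~\ref{def:isolability}. We verify it, taking orthonormal bases $Q_U,Q_V$. We have $\sigma_1=\|Q_U^\top Q_V\|_2=\max_{\|a\|=\|b\|=1}\langle Q_Ua,Q_Vb\rangle$. By Cauchy--Schwarz, $\sigma_1=1$ exactly when some unit vectors $u=Q_Ua$ and $w=Q_Vb$ satisfy $u=w$. That is, $\sigma_1=1$ if and only if $U\cap V$ contains a unit vector, and hence $\theta_1=0$ if and only if $U\cap V\neq\{0\}$. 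If one of the subspaces is trivial, the intersection is trivially $\{0\}$ and the angle condition is read as vacuous. Applying this with $(U,V)=(\mathcal{F},\Pi)$ and $(\mathcal{E}_{\mathrm{s},j}^{[0:R]},\Sigma_j^{[0:R]})$ yields \eqref{e:th_angle_isolability}.

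None of the steps is a real obstacle. The only subtleties are the degenerate cases, where $\mathcal{F}$ is the zero subspace or $\Pi=\mathbb{R}^N$, and the rank-versus-injectivity bookkeeping in (b).
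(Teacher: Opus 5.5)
Your proof is correct and follows the paper's argument. Necessity holds because any $v\in\mathcal{F}\cap\Pi$ lies in $\ker H$ while $H$ is injective on $\mathcal{F}$, and sufficiency uses a projection with kernel exactly $\Pi$: your orthogonal projector onto $\Pi^\perp$ is the paper's $P_{W_i^{[0:R]}}$ with the particular complement $W_i^{[0:R]}=\Pi^\perp$. The added details are correct: the rank-versus-injectivity bookkeeping, the degenerate cases, and the Cauchy--Schwarz proof that $\theta_{\min}(U,V)>0$ if and only if $U\cap V=\{0\}$, which the paper only asserts.
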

\begin{pf}
	We prove the actuator fault case here. The sensor fault isolability follows along similar lines.

	\textbf{Sufficiency:} Suppose~\eqref{e:th_act_isolability} holds. There exists a direct sum decomposition as $\mathbb{R}^{p(R+1)}=\Pi_i^{[0:R]}(x^*) \oplus W_i^{[0:R]}$, where $W_i^{[0:R]}$ is a complementary subspace. Consequently, every vector $v\in\mathbb{R}^{p(R+1)}$ can be written uniquely as $v = v_\Pi + v_W$, where $v_\Pi\in\Pi_i^{[0:R]}(x^*)$ and $v_W\in W_i^{[0:R]}$.
	
	Let us define the linear projection $P_{W_i^{[0:R]}} : \mathbb{R}^{p(R+1)} \to \mathbb{R}^{p(R+1)}$, where	$P_{W_i^{[0:R]}} v = v_W$. By construction, we have $P_{W_i^{[0:R]}} v=0$ for all $v\in\Pi_i^{[0:R]}(x^*)$. Moreover, $P_{W_i^{[0:R]}}$ is injective on	$\mathcal{F}_{a,i}^{[0:R]}(x^*)$ since for $v\in\mathcal{F}_{\mathrm{a},i}^{[0:R]}(x^*)$ that satisfies $P_{W_i^{[0:R]}} v = 0$, one has $v=v_\Pi \in \Pi_i^{[0:R]}(x^*)$. Hence, $v\in\mathcal{F}_{\mathrm{a},i}^{[0:R]}(x^*)\cap\Pi_i^{[0:R]}(x^*)= \{0\}$. 
		
	Let $H_{\mathrm{a},i}^{[0:R]} = P_{W_i^{[0:R]}}$. Hence, $H_{\mathrm{a},i}^{[0:R]} v = 0$ for all
	$v\in\Pi_i^{[0:R]}(x^*)$. Thus, conditions in Definition~\ref{def:isolability} are satisfied, and the $i$-th actuator channel is isolable.
	
	\textbf{Necessity:} Assume that there exists $H_{\mathrm{a},i}^{[0:R]}$ satisfying Definition~\ref{def:isolability}. Moreover, suppose $\mathcal{F}_{\mathrm{a},i}^{[0:R]}(x^*) \cap \Pi_i^{[0:R]}(x^*) \neq~\{0\}$. Hence, there exists $0\neq v\in \mathcal{F}_{\mathrm{a},i}^{[0:R]}(x^*)\cap\Pi_i^{[0:R]}(x^*)$. Since $v\in\Pi_i^{[0:R]}(x^*)$, condition in Definition~\ref{def:isolability} imply that $H_{\mathrm{a},i}^{[0:R]} v = 0$. This contradicts the assumption since $v\in\mathcal{F}_{\mathrm{a},i}^{[0:R]}(x^*)$ and $H_{\mathrm{a},i}^{[0:R]}$ is injective on $\mathcal{F}_{\mathrm{a},i}^{[0:R]}(x^*)$. Hence, \eqref{e:th_act_isolability} follows.
	
	Moreover, since $U\cap V = \{0\}$ is equivalent to having $\theta_{\min}(U,V)>0$, \eqref{e:th_angle_isolability} follows immediately. \qed
\end{pf}

\begin{rem}
		The geometric isolability analysis in this subsection is based on the first-order differential output map and therefore on the Jacobians of $f$, $g_k$, and $h$ evaluated along a fault-free trajectory $x^*(t)$. This is consistent with standard treatments of nonlinear observability and decoupling problems (\cite{isidori1985nonlinear,1101601}), where structural properties are characterized locally and along trajectories. In particular, $x^*(\cdot)$ denotes any solution of	the fault-free dynamics under the given input $u^*(\cdot)$. The Jacobian-based linearization at $(x^*(t),u^*(t))$ is therefore used as an instantaneous approximation of how faults and nominal dynamics affect the output differential map at that operating point. The subsequent isolability	conditions are thus local structural conditions and do not require selecting a single global reference trajectory.
	\end{rem}
\subsection{Non-Isolability of Faults}
In Theorem~\ref{th:isolability}, conditions under which isolability conditions for actuator and sensor fault were studied. Considering that under certain conditions, isolation cannot be achieved, we investigate these in the sense of the principal angles.

\begin{thm}[Non-Isolability of Faults]\label{th:non-isolability}
	Consider $x=x^*$ and let $R$ be chosen as in \eqref{e:diffMap}.	If for the $i$-th actuator channel there exists a time $t_0$ such that
	\begin{equation}
		\theta_{\min}\big(\mathcal{F}_{\mathrm{a},i}^{[0:R]}(x^*(t_0)), \Pi_i^{[0:R]}(x^*(t_0))\big) =0,
	\end{equation}
	the actuator channel $i$ is not structurally isolable at time $t_0$. An analogous statement holds for sensor channels with $\mathcal{E}_{\mathrm{s},j}^{[0:R]}$ and $\Sigma_j^{[0:R]}$.
\end{thm}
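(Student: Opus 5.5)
This is essentially the contrapositive of Theorem~\ref{th:isolability}(a), applied pointwise at the frozen operating point $x^*(t_0)$. Fix $t_0$ and write $x_0^*:=x^*(t_0)$. All objects in Definitions~\ref{def:subspaces}--\ref{def:isolability} are defined at a single point. Hence I would evaluate $C_R$, $F_{\mathrm{a},i}^{[0:R]}$ and $E_{\mathrm{s},j}^{[0:R]}$ at $x_0^*$. This yields fixed subspaces $\mathcal{F}_{\mathrm{a},i}^{[0:R]}(x_0^*)$ and $\Pi_i^{[0:R]}(x_0^*)$ of $\mathbb{R}^{p(R+1)}$, and the problem becomes purely linear-algebraic at that instant.

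\textbf{Step 1: zero angle gives a nontrivial intersection.} Let $Q_U$ and $Q_V$ be orthonormal bases of $U=\mathcal{F}_{\mathrm{a},i}^{[0:R]}(x_0^*)$ and $V=\Pi_i^{[0:R]}(x_0^*)$. The hypothesis $\theta_{\min}(U,V)=0$ means $\sigma_1=1$. Take unit singular vectors $a$ and $b$ with $Q_U^\top Q_V b=a$, and set $u=Q_U a\in U$ and $w=Q_V b\in V$. These are unit vectors with $\langle u,w\rangle=a^\top Q_U^\top Q_V b=\|a\|^2=1$. Equality in Cauchy--Schwarz then forces $u=w$, so $0\neq u\in U\cap V$. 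This makes the equivalence quoted before Definition~\ref{def:isolability} explicit. In the degenerate case $U=\{0\}$ (i.e.\ $F_{\mathrm{a},i}^{[0:R]}(x_0^*)=0$), $\theta_{\min}$ is not defined. I would note that the statement implicitly assumes both subspaces are nonzero, so that the angle exists.

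\textbf{Step 2: conclude via the necessity part of Theorem~\ref{th:isolability}.} Suppose, for contradiction, that channel $i$ is structurally isolable at $t_0$. Then some $H_{\mathrm{a},i}^{[0:R]}$ satisfies (a) and (b) of Definition~\ref{def:isolability}. Since $u\in\Pi_i^{[0:R]}(x_0^*)$, condition (a) gives $H_{\mathrm{a},i}^{[0:R]}u=0$. But $u$ is a nonzero element of $\mathcal{F}_{\mathrm{a},i}^{[0:R]}(x_0^*)$, on which $H_{\mathrm{a},i}^{[0:R]}$ is injective by (b), so this is a contradiction. Equivalently, I would simply invoke \eqref{e:th_angle_isolability} from Theorem~\ref{th:isolability} in contrapositive form. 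The sensor statement follows identically, with $\mathcal{E}_{\mathrm{s},j}^{[0:R]}$, $\Sigma_j^{[0:R]}$, $H_{\mathrm{s},j}^{[0:R]}$ and part (b) of Theorem~\ref{th:isolability}.

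\textbf{Main obstacle.} There is no real analytic obstacle. The only care needed is to interpret "at time $t_0$" correctly: isolability is a pointwise property along the trajectory, so only $x_0^*$ is used. No continuity in $t$ is needed, although continuity would let one extend the conclusion to neighborhoods where the angle is small. The other point to handle is the degenerate zero-subspace case mentioned in Step~1.
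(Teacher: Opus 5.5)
Your proposal is correct and follows the paper's own proof: $\theta_{\min}=0$ yields a nonzero $v\in\mathcal{F}_{\mathrm{a},i}^{[0:R]}\cap\Pi_i^{[0:R]}$, and then the necessity part of Theorem~\ref{th:isolability} gives the contradiction, since $H_{\mathrm{a},i}^{[0:R]}$ would have to annihilate $v$ while being injective on $\mathcal{F}_{\mathrm{a},i}^{[0:R]}$. Your singular-vector and Cauchy--Schwarz argument spells out the angle--intersection equivalence that the paper only asserts, and your caveat that both subspaces must be nonzero for the angle to be defined is a sound addition.
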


\begin{pf}
	If $\theta_{\min}(U,V)=0$ for subspaces $U,V$, $U\cap V\neq\{0\}$. Let $U=\mathcal{F}_{\mathrm{a},i}^{[0:R]}(x^*(t_0))$ and $V=\Pi_i^{[0:R]}(x^*(t_0))$. Hence, there exists a non-zero $v\in\mathcal{F}_{\mathrm{a},i}^{[0:R]}\cap\Pi_i^{[0:R]}$. In light of Theorem~\ref{th:isolability}, the $i$-th channel fault cannot be isolated. \qed
\end{pf}

Theorem~\ref{th:non-isolability} shows that if the $i$-th fault signature becomes linearly dependent with its residue subspace in the output differential map \eqref{e:diffMap}, there exists no matrix $H_{\mathrm{a},i}^{[0:R]}$ such that $H_{\mathrm{a},i}^{[0:R]} y^{[0:R]}$ satisfies the isolability conditions of Definition~\ref{def:isolability} at that point. In other words, within the considered differential-output framework, perfect instantaneous isolation of channel $i$ by a static linear residual is structurally impossible at that state. This is a structural limitation of the system~\eqref{e:sys} at that state.

\section{Neural Observer with Mirror Descent Last Layer Adaptation}\label{s:observer}
The geometric analysis in Section~\ref{s:subspaces} and Theorem~\ref{th:isolability} shows that if the principal angles between a fault signature subspace and its residue subspace are strictly positive, there exist linear maps that generate residuals sensitive to a single channel and decoupled from all others in the output differential space \eqref{e:diffMap}. In this section, we translate this static, differential-geometric picture into a dynamic, implementable scheme. We design a neural network-based observer based on mirror descent last layer adaptation laws for neural networks so that, when the principal angle conditions hold, the learned fault estimates align with the isolable directions suggested by the geometric conditions. The latter is achieved without explicitly computing the output differential map~\eqref{e:diffMap} in real-time.

\subsection{Neural Network-Based Observer}
We consider the following Luenberger-like observer with additive fault estimates:
\begin{align}\label{e:observer}
	\dot{\hat x} &= f(\hat x)+ g(\hat x) u	+ g(\hat x) \hat f_{\mathrm{a}}(\hat x,u) + L(y - \hat y), \nonumber\\
	\hat y &= h(\hat x)	+ \hat f_{\mathrm{s}}(\hat x,u),
\end{align}
where $\hat x\in\mathbb{R}^n$ is the estimated state, $L$ is the observer gain, and $g(\hat{x}) = [g_1(\hat{x})\ \cdots\ g_m(\hat{x})]$. Moreover, $\hat f_\mathrm{a}(\hat x, u)= W_\mathrm{a}^\top \phi_\mathrm{a}(\hat x, u) \in \mathbb{R}^{m}$ and $\hat f_\mathrm{s}(\hat x, u)= W_\mathrm{s}^\top \phi_\mathrm{s}(\hat x,u) \in \mathbb{R}^{p}$ are neural network-based actuator and sensor fault estimators, respectively, where $W_\mathrm{a}\in\mathbb{R}^{n_\mathrm{a}\times m}$ and $W_\mathrm{s}\in\mathbb{R}^{n_\mathrm{s}\times p}$ are the last layer matrices of the neural networks. Furthermore, $\phi_\mathrm{a}:\mathbb{R}^n\times \mathbb{R}^m\to\mathbb{R}^{n_\mathrm{a}}$ and $\phi_\mathrm{s}:\mathbb{R}^n\times \mathbb{R}^m\to\mathbb{R}^{n_\mathrm{s}}$ are smooth feature maps that are the outputs of deep hidden neural network layers trained offline.

\begin{assum}\label{assum:NN_approx}
	There exist unknown bounded ideal weights $W_\mathrm{a}^* \in\mathbb{R}^{n_\mathrm{a}\times m}$ and $W_\mathrm{s}^* \in\mathbb{R}^{n_\mathrm{s}\times p}$ such that 
	\begin{align*}
		f_\mathrm{a}(x,u)&= W_\mathrm{a}^{* \top}\phi_a(x,u) + \varepsilon_\mathrm{a}(t), \\
		f_\mathrm{s}(x,u)&= W_\mathrm{s}^{* \top}\phi_\mathrm{s}(x,u) + \varepsilon_\mathrm{s}(t),
	\end{align*}
	where $f_\mathrm{a}(x,u)=[f_{\mathrm{a},1}\ \cdots f_{\mathrm{a},m}]^\top$, $f_\mathrm{s}(x,u)=\sum_{j=1}^{q_\mathrm{s}} e_j f_{\mathrm{s},j}$, and $\|\varepsilon_a(t)\|\le \bar\varepsilon_a$ and $\|\varepsilon_s(t)\|\le \bar\varepsilon_s$ are bounded approximation errors. Furthermore, we assume the feature maps are Lipschitz:  $\|\phi_\mathrm{a}(x, u)-\phi_\mathrm{a}(\hat x, u)\| \le L_{\phi_\mathrm{a}}^\mathrm{e}\|x-\hat x\|$, $\|\phi_\mathrm{s}(x, u)-\phi_\mathrm{s}(\hat x, u)\| \le L_{\phi_\mathrm{s}}^\mathrm{e}\|x-\hat x\|$, where $L_{\phi_\mathrm{a}}^\mathrm{e}>0$ and $L_{\phi_\mathrm{s}}^\mathrm{e}>0$.
\end{assum}

\begin{assum}\label{assum:contraction}
	Consider \eqref{e:sys} under fault-free conditions and let $e=x-\hat x$. There exists a smooth metric $M(x)=~M(x)^\top\succ 0$ and a constant $\lambda>0$ such that the nominal error dynamics $\dot e = A_\mathrm{L}(x)e$ satisfy the inequality
	\begin{equation*}
		\dot M + A_\mathrm{L}^\top M + M A_\mathrm{L} \preceq -2\lambda M,
	\end{equation*}
	where 
	\begin{equation*}
		A_\mathrm{L}(x) = \frac{\partial f}{\partial x}(x) + \sum_{k=1}^{m} \frac{\partial g_k}{\partial x}(x) u_k - L \frac{\partial h}{\partial x}(x).
	\end{equation*}
    Moreover, there exist scalars $\underline m,\bar m>0$ such that $\underline m I \preceq M(x) \preceq \bar m I$
\end{assum}


The existence of ideal weights in Assumption~\ref{assum:NN_approx} is a standard consideration in neural network–based approximation and adaptive control frameworks, and is commonly used to derive boundedness guarantees. Moreover, Assumption~\ref{assum:contraction} is common in contraction-based observer design methods and can be enforced, for example, via neural network-based parameterization of $M(x)$.

\subsection{Bregman Divergence and Potential Function}
We now define the mirror descent adaptation laws for the last layer matrices. Let $\psi:\mathbb{R}^d\to\mathbb{R}$ be a strictly convex, differentiable potential function (i.e., the mirror map). The associated Bregman divergence is defined as
\begin{equation*}
	D_\psi(\omega\,\|\,\omega^*) := \psi(\omega) - \psi(\omega^*)- \langle\nabla\psi(\omega^*),\omega-\omega^*\rangle,
\end{equation*}
where $\omega, \omega^*\in\mathbb{R}^d$. This divergence measures distance with respect to the geometry induced by $\psi$ (\cite{boffi2021implicit}). Moreover, the Bregman divergence is non-negative and becomes zero if and only if $\omega=\omega^*$. For a time-varying parameter $\omega(t)\in\mathbb{R}^d$ and a constant reference $\omega^*$, the derivative of the Bregman divergence is
\begin{equation}\label{e:BregDot}
	\frac{d}{dt} D_\psi \big(\omega^*\,\|\,\omega(t)\big) = (\omega(t)-\omega^*)^\top \nabla^2 \psi\big(\omega(t)\big)\,\dot \omega(t),
\end{equation}
where $\nabla^2\psi(\cdot) \succ 0$ is the Hessian of the potential function.

Different choices of $\psi$ encode different parameter-space geometries and can be informed by the principal angle study of Section~\ref{s:subspaces}. Thus, to encode both isolable directions and channel-level sparsity consistent with the fault isolation, we adopt the elastic net mirror map. 

\begin{defn}[Elastic Net Mirror Map]\label{def:elastic-net}
	For a weight matrix $W=[w_{(:,1)}\ \cdots\ w_{(:,c)}]\in\mathbb{R}^{n_\mathrm{e}\times c}$, we have
	\begin{equation}\label{e:elastic-net}
		\psi_{\mathrm{EN}}(W) := \frac{\beta}{2}\,\langle W,\,\Xi W\rangle + \alpha \sum_{j=1}^c \sqrt{\|w_{(:,j)}\|_2^2 + \varepsilon},
	\end{equation}
	where $\beta>0$, $\alpha>0$, $\varepsilon>0$, and $\Xi=\mathrm{diag}(\xi_1,\dots,\xi_c)\succ 0$ is a diagonal scaling matrix.
\end{defn}

Intuitively, the elastic-net mirror map combines two complementary effects: the quadratic term, weighted by $\Xi$, induces a Riemannian-like geometry in weight space in which channels associated with smaller minimal principal angles (i.e., geometrically harder to isolate) can be assigned larger curvature and hence smaller effective step size, while the smoothed $\ell_1$-type term promotes channel-wise sparsity, so that adaptation focuses on the few channels actually excited by the current fault. The map in~\eqref{e:elastic-net} satisfies the following properties:
\begin{enumerate}[(a)]
	\item The function $\psi_{\mathrm{EN}}$ is strongly convex and there exists $\lambda_\psi>0$ such that
	\begin{equation*}
		D_{\psi_{\mathrm{EN}}}(W\|W^*) \ge \frac{\lambda_\psi}{2}\,\|W-W^*\|_F^2.
	\end{equation*}

	\item Its Hessian is block-diagonal across columns, and there exists $\underline\lambda>0$ such that
	\begin{equation*}
	\nabla^2\psi_{\mathrm{EN}}(W)\succeq \beta\,\Xi,\qquad \big\|\big(\nabla^2\psi_{\mathrm{EN}}(W)\big)^{-1}\big\|\le \frac{1}{\beta\,\underline\lambda},
	\end{equation*}
	where $\underline\lambda:=\min_j\xi_j$.
	\item Column-separability implies that the variable
	$\Omega_j=\nabla_{w_{(:,j)}}\psi_{\mathrm{EN}}(W)$ depends only on $w_{(:,j)}$, which yields channel-wise decoupling.
\end{enumerate}

The quadratic term in~\eqref{e:elastic-net} defines a Euclidean geometry in weight space where larger $\xi_j$ emphasizes the $j$-th channel by shrinking the corresponding block of the inverse Hessian. Also, the second term in~\eqref{e:elastic-net} is a smooth proxy for an $L_1$ penalty. It promotes channel-level sparsity by discouraging simultaneous activation of many columns.

\subsection{Mirror Descent Adaptive Laws}
Let us define $r:= y - \hat y$. Furthermore, the variable $z:= g(\hat x)^\top M(\hat x) L r$ captures the sensitivity of the contraction energy $e^\top M e$ to perturbations in the actuator fault estimates. We consider the following loss function:
\begin{equation}\label{e:loss}
	\mathcal{L}(\hat x,W_\mathrm{a},W_\mathrm{s})= \frac{1}{2} \|r\|^2,
\end{equation}
which drives the observer to reduce the state estimation error. For the sensor weight $W_\mathrm{s}$, the gradient of~\eqref{e:loss} is
\begin{equation*}
	\nabla_{w_{\mathrm{s},j}}\mathcal{L} = -\phi_\mathrm{s}(\hat x) r_j,
\end{equation*}
where $r_j$ is the $j$-th element of $r$, for $j=1,\dots,p$. Moreover, for the actuator weight $W_a$, the gradient of \eqref{e:loss} can be approximated by $\nabla_{w_{\mathrm{a},i}}\mathcal{L}= \phi_\mathrm{a}(\hat x) z_i$, where $z_i$ is the $i$-th entry of $z$, for $i=1,\dots,m$.

Considering these gradients, the mirror descent-based adaptation laws associated with~\eqref{e:loss} are
\begin{align}\label{e:adaptiveLaw_matrix}
		\dot W_\mathrm{a} &= - \Gamma_\mathrm{a} \big(\nabla^2\psi_\mathrm{a}(W_\mathrm{a})\big)^{-1} \phi_\mathrm{a}(\hat x) z^\top - \sigma_\mathrm{a} W_\mathrm{a}, \nonumber\\
		\dot W_\mathrm{s} &= \Gamma_\mathrm{s} \big(\nabla^2\psi_\mathrm{s}(W_\mathrm{s})\big)^{-1} \phi_\mathrm{s}(\hat x) r^\top - \sigma_\mathrm{s} W_\mathrm{s},
\end{align}
where $\Gamma_\mathrm{a},\Gamma_\mathrm{s}\succ 0$ are diagonal gain matrices and $\sigma_\mathrm{a}, \sigma_\mathrm{s}> 0$ are scalars. Per-channel adaptive laws can be expressed by
\begin{align}\label{e:adaptiveLaw_channel}
		\dot w_{\mathrm{a},i} &= -\Gamma_{\mathrm{a},i}	K_{\mathrm{a},i}(W_\mathrm{a})^{-1}\phi_\mathrm{a}(\hat x) z_i - \sigma_\mathrm{a} w_{\mathrm{a},i}, \nonumber\\
		\dot w_{\mathrm{s},j} &= \Gamma_{\mathrm{s},j} K_{\mathrm{s},j}(W_\mathrm{s})^{-1}\phi_\mathrm{s}(\hat x)r_j - \sigma_\mathrm{s} w_{\mathrm{s},j},
\end{align}
where $K_{\mathrm{a},i}$, $K_{\mathrm{s},j}$, $\Gamma_{\mathrm{a},i}$, and $\Gamma_{\mathrm{s},j}$ are the $i$-th and $j$-th diagonal blocks of $\nabla^2\psi_a(W_a)$, $\nabla^2\psi_s(W_s)$, $\Gamma_\mathrm{a}$, and $\Gamma_\mathrm{s}$, respectively. 

It is worth noting that each Hessian block satisfies $\|K_{\mathrm{a},i}^{-1}(W_a)\|\le\frac{1}{\beta\,\xi_{\mathrm{a},i}}$ and $\|K_{\mathrm{s},j}^{-1}(W_s)\|\le\frac{1}{\beta\,\xi_{\mathrm{s},j}}$. Hence, in order to design elements of $\Xi$, channels whose fault signature subspaces are nearly aligned with the residue subspaces (i.e., small minimal principal angle) can be assigned larger weights $\xi_i$, which increases the local curvature and thus reduces the mirror descent step size in those ambiguous directions. On the other hand, channels with larger principal angles (i.e., better geometric isolability) can be assigned smaller $\xi_i$ so that the mirror descent step is larger along geometrically informative directions.

\begin{rem}
		All fault signature and residue subspaces introduced in Definitions~\ref{def:subspaces} and \ref{def:residue} are implicitly time-varying, since they depend on the Jacobians of the dynamics and output map evaluated along the fault-free trajectory $x^*(t)$. Consequently, the minimal principal angle $\theta_{\min}(\mathcal{F}_{\mathrm{a},i}^{[0:R]}(t),\Pi_i^{[0:R]}(t))$ is also time-varying, and isolability in this context should be interpreted as an instantaneous structural property, where a channel may be isolable at some time instants and non-isolable at others (as reflected Theorems~\ref{th:isolability} and~\ref{th:non-isolability}). The observer \eqref{e:observer} and the mirror descent adaptation laws \eqref{e:adaptiveLaw_matrix} do not require real-time computation of these subspaces or angles. Nevertheless, the geometric information extracted from this analysis can be	used offline to inform the design of the mirror descent geometry. For each actuator or sensor fault channel, one may sample representative fault-free trajectories over the operational region and evaluate the instantaneous minimal principal angle $\theta_{\min}(\mathcal{F}_{\mathrm{a},i}^{[0:R]}(x^*(t)), \Pi_i^{[0:R]}(x^*(t)))$. From these samples, a conservative lower bound or statistically robust estimate of the minimal angle for each channel, denoted by $\underline{\theta}_{\mathrm{a},i}$ and $\underline{\theta}_{\mathrm{s},j}$, can be obtained. These offline estimates, which determine how well each channel is geometrically separable in typical or worst-case operating conditions, are then used to select the curvature parameters in the elastic net mirror map
		(i.e., the diagonal entries of $\Xi$). Hence, the principal angle geometry is incorporated once into the mirror map, while the online observer operates without explicit subspace or angle computation.
	\end{rem}
    
\subsection{Exponential Stability and Robustness Analysis}\label{s:stability}
We now analyze the stability of the neural network-based observer \eqref{e:observer} with mirror descent last layer adaptation, and show that both the state estimation error and the parameter errors are UUB.

Using \eqref{e:sys} and \eqref{e:observer}, the estimation error dynamics can be obtained as
\begin{align}\label{e:error_dyn}
	\dot e	=&\ f(x) - f(\hat x) + \big(g(x)-g(\hat x)\big) u + g(x) f_\mathrm{a}(x,u) \nonumber\\
	& - g(\hat x)\hat f_\mathrm{a}(\hat x,u) - L\Big(h(x)-h(\hat x)\Big) \nonumber\\
	& - L\Big(f_\mathrm{s}(x,u)-\hat f_\mathrm{s}(\hat x,u)\Big) + d(t).
\end{align}
Let $\Delta_\mathrm{nom}(x,\hat x,u)=\ f(x)-f(\hat x)+ \big(g(x)-g(\hat x)\big)u - L\big(h(x)-h(\hat x)\big)- A_\mathrm{L}(\hat x) e$
and the mismatch due to the actuator faults as $\Delta_{\mathrm{fa}}(x,\hat x,u)= \big(g(x)-g(\hat x)\big) f_\mathrm{a}(x,u)$. Consequently, \eqref{e:error_dyn} can be rewritten as
\begin{align}\label{e:edot_split}
	\dot e =&\ A_\mathrm{L}(\hat x)e + \Delta_\mathrm{nom}(x,\hat x,u) + g(\hat x)\big(f_\mathrm{a}(x,u) - \hat f_\mathrm{a}(\hat x,u)\big) \nonumber\\
	&\ + \Delta_{\mathrm{fa}}(x,\hat x,u) - L\big(f_\mathrm{s}(x,u)-\hat f_\mathrm{s}(\hat x,u)\big)
	+ d(t).
\end{align}

As per Assumption \ref{assum:NN_approx}, we decompose
\begin{align}\label{e:fa_diff}
	f_\mathrm{a}(x,u) - \hat f_\mathrm{a}(\hat x,u)
	=&\ \tilde W_\mathrm{a}^\top \phi_\mathrm{a}(\hat x,u) + W_\mathrm{a}^{*\top}\big(\phi_\mathrm{a}(x,u) \nonumber\\
	& - \phi_\mathrm{a}(\hat x,u)\big) + \varepsilon_\mathrm{a}(t), 
\end{align}
and similarly
\begin{align}\label{e:fs_diff}
	f_\mathrm{s}(x,u) - \hat f_\mathrm{s}(\hat x,u)
	=&\ \tilde W_\mathrm{s}^\top \phi_\mathrm{s}(\hat x,u) + W_\mathrm{s}^{*\top}\big(\phi_\mathrm{s}(x,u) \nonumber\\
	& - \phi_\mathrm{s}(\hat x,u)\big) + \varepsilon_\mathrm{s}(t),
\end{align}
where $\tilde W_\mathrm{a}= W_\mathrm{a}^*-W_\mathrm{a}$ and $\tilde W_\mathrm{s}= W_\mathrm{s}^*-W_\mathrm{s}$.

Substituting~\eqref{e:fa_diff} and \eqref{e:fs_diff} into~\eqref{e:edot_split}, one obtains
\begin{align}\label{e:edot_full}
	\dot e =&\ A_\mathrm{L}(\hat x)e
	+ g(\hat x)\tilde W_\mathrm{a}^\top \phi_\mathrm{a}(\hat x,u)
	- L\tilde W_\mathrm{s}^\top \phi_\mathrm{s}(\hat x,u) \nonumber\\
	&\ + \Delta_\mathrm{nom}(x,\hat x,u)
	+ \Delta_{\mathrm{fa}}(x,\hat x,u) \nonumber\\
	&\ + g(\hat x)W_\mathrm{a}^{*\top}\big(\phi_\mathrm{a}(x,u) - \phi_\mathrm{a}(\hat x,u)\big) \nonumber\\
	&\ - L W_\mathrm{s}^{*\top}\big(\phi_\mathrm{s}(x,u) - \phi_\mathrm{s}(\hat x,u)\big) \nonumber\\
	&\ + g(\hat x)\varepsilon_\mathrm{a}(t)
	- L\varepsilon_\mathrm{s}(t)
	+ d(t).
\end{align}

Let $D_{\psi_\mathrm{a}}$ and $D_{\psi_\mathrm{s}}$ denote the Bregman divergences associated with the strictly convex mirror maps $\psi_\mathrm{a}$ and $\psi_\mathrm{s}$ (i.e., the elastic net mirror map in Definition~\ref{def:elastic-net}). We consider the following Lyapunov candidate:
\begin{align}\label{e:V}
	V(e,\tilde W_\mathrm{a},\tilde W_\mathrm{s})=& \frac{1}{2} e^\top M(\hat x) e + \sum_{i=1}^{m} \frac{1}{\gamma_{\mathrm{a},i}} D_{\psi_\mathrm{a}}\big(w_{\mathrm{a},i}^*\,\|\,w_{\mathrm{a},i}\big) \nonumber\\
	& + \sum_{j=1}^{p} \frac{1}{\gamma_{\mathrm{s},j}} D_{\psi_\mathrm{s}}\big(w_{\mathrm{s},j}^*\,\|\,w_{\mathrm{s},j}\big),
\end{align}
where $w_{\mathrm{a},i}$, $w_{\mathrm{s},j}$, $\gamma_{\mathrm{a},i}$, and $\gamma_{\mathrm{s},j}>0$ are the $i$-th and $j$-th columns of $W_\mathrm{a}$ and $W_\mathrm{s}$, and the diagonal entries of $\Gamma_\mathrm{a}$ and $\Gamma_\mathrm{s}$, respectively.

The derivative of \eqref{e:V} along the trajectories of \eqref{e:sys} and \eqref{e:observer} yields
\begin{align}\label{e:Vdot}
	\dot V &= \frac{d}{dt}\big(\frac{1}{2}e^\top M(\hat x)e\big)
	+ \sum_{i=1}^{m} \frac{1}{\gamma_{\mathrm{a},i}} \frac{d}{dt} D_{\psi_\mathrm{a}}\big(w_{\mathrm{a},i}^*\,\|\,w_{\mathrm{a},i}\big) \nonumber\\
	&\quad + \sum_{j=1}^{p} \frac{1}{\gamma_{\mathrm{s},j}} \frac{d}{dt} D_{\psi_\mathrm{s}}\big(w_{\mathrm{s},j}^*\,\|\,w_{\mathrm{s},j}\big).
\end{align}
Considering \eqref{e:edot_full}, for the first term of \eqref{e:Vdot}, we have
\begin{align}\label{e:Vdot_state}
	&\frac{d}{dt} \big( \frac{1}{2} e^\top M e\big)
	\!=\! \tfrac{1}{2}\,e^\top\!\big(\dot M \!+\! A_\mathrm{L}^\top M\!+\!M A_\mathrm{L}\big)e + e^\top M g(\hat x) \\
	&\!\times\!\tilde W_\mathrm{a}^\top \phi_\mathrm{a}(\hat x,u)\!-\! e^\top M L \tilde W_\mathrm{s}^\top \phi_\mathrm{s}(\hat x,u) 
	\!+\! e^\top M \Delta_\mathrm{rem}(x,\hat x,u,t), \nonumber
\end{align}
where we have defined the aggregate remainder
\begin{align*}
	\Delta_\mathrm{rem} =& \Delta_\mathrm{nom}(x,\hat x,u)
	+ \Delta_{\mathrm{fa}}(x,\hat x,u) + g(\hat x)W_\mathrm{a}^{*\top} \\
	& \times \big(\phi_\mathrm{a}(x,u) - \phi_\mathrm{a}(\hat x,u)\big) - L W_\mathrm{s}^{*\top}\big(\phi_\mathrm{s}(x,u) \\
	& - \phi_\mathrm{s}(\hat x,u)\big) + g(\hat x)\varepsilon_\mathrm{a}(t) - L\varepsilon_\mathrm{s}(t) + d(t).
\end{align*}

From \eqref{e:BregDot} and the adaptive laws \eqref{e:adaptiveLaw_channel}, we obtain
\begin{align*}
	\frac{1}{\gamma_{\mathrm{a},i}}\frac{d}{dt}D_{\psi_\mathrm{a}}\big(\!w_{\mathrm{a},i}^*\,\|\,w_{\mathrm{a},i}\!\big)
	\!=&\! -\frac{1}{\gamma_{\mathrm{a},i}}\tilde w_{\mathrm{a},i}^\top K_{\mathrm{a},i}\dot w_{\mathrm{a},i} \\
	\!=&\! \tilde w_{\mathrm{a},i}^\top \phi_\mathrm{a}(\hat x,u) z_i + \frac{\sigma_\mathrm{a}}{\gamma_{\mathrm{a},i}}\tilde w_{\mathrm{a},i}^\top K_{\mathrm{a},i}w_{\mathrm{a},i},
\end{align*}
where $z_i$ is the $i$-th element of $z$. Summing over $i$ and stacking columns yields
\begin{equation}\label{e:Vdot_param_a}
	\sum_{i=1}^{m} \frac{1}{\gamma_{\mathrm{a},i}}
	\frac{d}{dt}D_{\psi_\mathrm{a}}\big(w_{\mathrm{a},i}^*\,\|\,w_{\mathrm{a},i}\big)
	= z^\top \tilde W_\mathrm{a}^\top \phi_\mathrm{a}(\hat x,u) + \mathcal{R}_\mathrm{a},
\end{equation}
where $\mathcal{R}_\mathrm{a} = \sigma_\mathrm{a}\sum_{i}\tfrac{1}{\gamma_{\mathrm{a},i}}\tilde w_{\mathrm{a},i}^\top K_{\mathrm{a},i}w_{\mathrm{a},i}$. Analogously, for the sensor channels, one has
\begin{equation}\label{e:Vdot_param_s_correct}
	\sum_{j=1}^{p} \frac{1}{\gamma_{\mathrm{s},j}}
	\frac{d}{dt}D_{\psi_\mathrm{s}}\big(w_{\mathrm{s},j}^*\,\|\,w_{\mathrm{s},j}\big)
	= - r^\top \tilde W_\mathrm{s}^\top \phi_\mathrm{s}(\hat x,u) + \mathcal{R}_\mathrm{s},
\end{equation}
where $\mathcal{R}_\mathrm{s} = \sigma_\mathrm{s}\sum_{j}\tfrac{1}{\gamma_{\mathrm{s},j}}\tilde w_{\mathrm{s},j}^\top K_{\mathrm{s},j}w_{\mathrm{s},j}$.

By linearizing $r$, $z$, $h$, and the observer dynamics around $(x,\hat x)$, it can be shown that $e^\top M g(\hat x)\tilde W_\mathrm{a}^\top \phi_\mathrm{a}(\hat x,u)= -z^\top \tilde W_\mathrm{a}^\top \phi_\mathrm{a}(\hat x,u)+ \delta_{\mathrm{a}}(e)$ and $\, -e^\top M L\tilde W_\mathrm{s}^\top \phi_\mathrm{s}(\hat x,u)\, = \, r^\top \tilde W_\mathrm{s}^\top \phi_\mathrm{s}(\hat x,u)+ \delta_{\mathrm{s}}(e)$, where $\delta_{\mathrm{a}}(e)$ and $\delta_{\mathrm{s}}(e)$ are higher-order terms arising from the linearization and Lipschitz properties, satisfying $\|\delta_{\mathrm{a}}(e)\| \le c_{\mathrm{a}} \|e\|^2$ and $\|\delta_{\mathrm{s}}(e)\| \le c_{\mathrm{s}} \|e\|^2$, for some positive constants $c_{\mathrm{a}}$ and $c_{\mathrm{s}}$.

Using $\tilde w_{\mathrm{a},i}^\top K_{\mathrm{a},i}w_{\mathrm{a},i} = -\tilde w_{\mathrm{a},i}^\top K_{\mathrm{a},i}\tilde w_{\mathrm{a},i} + \tilde w_{\mathrm{a},i}^\top K_{\mathrm{a},i}w^*_{\mathrm{a},i}$, Young's inequality, and property~(b) of $\psi_\mathrm{EN}$, one has
\begin{equation*}
	\mathcal{R}_\mathrm{a} \le -\kappa_\mathrm{a}\|\tilde W_\mathrm{a}\|_F^2 + c_\mathrm{a}^*, \quad
	\mathcal{R}_\mathrm{s} \le -\kappa_\mathrm{s}\|\tilde W_\mathrm{s}\|_F^2 + c_\mathrm{s}^*,
\end{equation*}
where $\kappa_\mathrm{a},\kappa_\mathrm{s}>0$ proportional to $\sigma_\mathrm{a},\sigma_\mathrm{s}$ and $c_\mathrm{a}^*,c_\mathrm{s}^*>0$ depending only on the bounded ideal weights $W_\mathrm{a}^*,W_\mathrm{s}^*$. Consequently, one obtains
\begin{align}\label{e:Vdot_bounds}
	\dot V \le -\lambda\,e^\top M e &- \kappa_\mathrm{a}\|\tilde W_\mathrm{a}\|_F^2 - \kappa_\mathrm{s}\|\tilde W_\mathrm{s}\|_F^2 + \delta_{\mathrm{a}}(e) \nonumber \\
	&+ \delta_{\mathrm{s}}(e) + e^\top M \Delta_\mathrm{rem}(x,\hat x,u,t) + c^*,
\end{align}
where $c^*:=c_\mathrm{a}^*+c_\mathrm{s}^*$. Using property~(a) of $\psi_\mathrm{EN}$, one has $V \le \tfrac{\bar m}{2}\|e\|^2 + \tfrac{L_\psi}{2\gamma_{\min}}\big(\|\tilde W_\mathrm{a}\|_F^2+\|\tilde W_\mathrm{s}\|_F^2\big)$ for some $L_\psi>0$, so~\eqref{e:Vdot_bounds} can be rearranged into
\begin{equation}\label{e:Vdot_V}
	\dot V \le -\alpha_V V + \sigma,
\end{equation}
for some $\alpha_V>0$ and $\sigma>0$, provided $\sigma_\mathrm{a},\sigma_\mathrm{s}$ are large enough that $\kappa_\mathrm{a},\kappa_\mathrm{s}$ dominate the residual $\|\tilde W\|^2$ coefficients arising from $\Delta_\mathrm{rem}$.

\begin{assum}\label{assum:Delta_rem_bound}
	We consider $f$, $g$, and $h$ to be locally Lipschitz on a compact set that contains the trajectories $(x(t),\hat x(t),u(t))$ for all $t\ge 0$.
\end{assum}
As a direct consequence of Assumptions~\ref{assum:NN_approx}, \ref{assum:contraction}, and \ref{assum:Delta_rem_bound}, there exist constants $c_0,c_1>0$, such that, for all $(x,\hat x,u,t)$ in the above sets, one has $\big\|\Delta_\mathrm{rem}(x,\hat x,u,t)\big\| \le c_0 + c_1 \|e\|$.

\begin{thm}[Boundedness of Estimations]\label{thm:UUB_observer}
	Consider the system \eqref{e:sys} under simultaneous actuator and sensor faults, the observer \eqref{e:observer} with mirror descent last layer adaptation, and the Lyapunov candidate \eqref{e:V}. Suppose Assumptions~\ref{assum:NN_approx}, \ref{assum:contraction}, and \ref{assum:Delta_rem_bound} hold and that $\sigma_\mathrm{a},\sigma_\mathrm{s}>0$ are chosen such that $\alpha_V>0$ in~\eqref{e:Vdot_V}. The estimation error $e(t)$ and the weight parameter errors $\tilde W_\mathrm{a}(t)$ and $\tilde W_\mathrm{s}(t)$ are uniformly ultimately bounded. Moreover, $e(t)$ converges exponentially to a residual set of size $\sigma/\alpha_V$.
\end{thm}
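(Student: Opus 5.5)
The plan is to take the differential inequality \eqref{e:Vdot_bounds} as the starting point, close it into the comparison form \eqref{e:Vdot_V} with explicit constants, and then apply the comparison lemma. The bounds are needed in both directions: $V$ must be sandwiched between quadratic functions of $\|e\|$ and $\|\tilde W\|_F$.

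\textbf{Step 1: sandwich bounds on $V$.} Property~(a) of $\psi_{\mathrm{EN}}$ together with $\underline m I\preceq M$ gives
\[
V\ge \tfrac{\underline m}{2}\|e\|^2+\tfrac{\lambda_\psi}{2\gamma_{\max}}\big(\|\tilde W_\mathrm{a}\|_F^2+\|\tilde W_\mathrm{s}\|_F^2\big).
\]
The upper bound $V\le \tfrac{\bar m}{2}\|e\|^2+\tfrac{L_\psi}{2\gamma_{\min}}(\cdot)$ comes from the Hessian of $\psi_{\mathrm{EN}}$ being bounded above. That Hessian is $\beta\Xi$ plus the smoothed-norm term, whose Hessian is at most $\alpha/\sqrt{\varepsilon}$. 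This gives a uniform $L_\psi=\beta\max_j\xi_j+\alpha/\sqrt\varepsilon$.

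\textbf{Step 2: absorb the cross terms in \eqref{e:Vdot_bounds}.} Use $\|\Delta_{\mathrm{rem}}\|\le c_0+c_1\|e\|$ from Assumption~\ref{assum:Delta_rem_bound} to write
\[
e^\top M\Delta_{\mathrm{rem}}\le \bar m c_1\|e\|^2+\bar m c_0\|e\|\le (\bar m c_1+\tfrac{\eta}{2})\|e\|^2+\tfrac{\bar m^2c_0^2}{2\eta}.
\]
The higher-order terms satisfy $|\delta_\mathrm{a}|+|\delta_\mathrm{s}|\le (c_\mathrm{a}+c_\mathrm{s})\|e\|^2$. This gives
\[
\dot V\le -\big(\lambda\underline m-\bar m c_1-\tfrac{\eta}{2}-c_\mathrm{a}-c_\mathrm{s}\big)\|e\|^2-\kappa_\mathrm{a}\|\tilde W_\mathrm{a}\|_F^2-\kappa_\mathrm{s}\|\tilde W_\mathrm{s}\|_F^2+c^*+\tfrac{\bar m^2c_0^2}{2\eta}.
\]
The coefficient on $\|e\|^2$ must be positive, i.e.\ the contraction rate $\lambda$ must dominate the Lipschitz remainder $c_1$ and the linearization constants. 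I take this to be the content of the hypothesis ``$\alpha_V>0$ in \eqref{e:Vdot_V}'', which is in turn enabled by the choice of $L$ in Assumption~\ref{assum:contraction}.

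This is the main obstacle. The $\delta$ terms are quadratic only locally, and $c_1$ comes from the compact set of Assumption~\ref{assum:Delta_rem_bound}. The argument is therefore valid on that compact set, and I would state that the trajectories are assumed to remain in it, as that assumption does.

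\textbf{Step 3: comparison lemma and conclusions.} Let $\mu>0$ be the $\|e\|^2$ coefficient from Step~2. Using the upper bound of Step~1 yields
\[
\dot V\le -\alpha_V V+\sigma,\quad \alpha_V=\min\Big\{\tfrac{2\mu}{\bar m},\tfrac{2\gamma_{\min}\kappa_\mathrm{a}}{L_\psi},\tfrac{2\gamma_{\min}\kappa_\mathrm{s}}{L_\psi}\Big\},\quad \sigma=c^*+\tfrac{\bar m^2c_0^2}{2\eta}.
\]
The comparison lemma then gives
\[
V(t)\le e^{-\alpha_V t}V(0)+\tfrac{\sigma}{\alpha_V}\big(1-e^{-\alpha_V t}\big).
\]
Combining this with the lower bound of Step~1 gives, for the state error,
\[
\|e(t)\|^2\le \tfrac{2}{\underline m}\big(e^{-\alpha_V t}V(0)+\sigma/\alpha_V\big),
\]
and for the weight errors,
\[
\|\tilde W_\mathrm{a}\|_F^2+\|\tilde W_\mathrm{s}\|_F^2\le \tfrac{2\gamma_{\max}}{\lambda_\psi}\big(e^{-\alpha_V t}V(0)+\sigma/\alpha_V\big).
\]
For any $\epsilon>0$, there is then a time $T(\epsilon,V(0))$ after which both quantities lie within an $\epsilon$-inflation of balls whose radii are determined by $\sigma/\alpha_V$. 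This $T$ depends only on $V(0)$, not on the initial time, so the boundedness is uniform and ultimate. The exponential decay of the transient establishes exponential convergence of $e(t)$ to the residual set whose size is governed by $\sigma/\alpha_V$ (up to the constant $2/\underline m$).
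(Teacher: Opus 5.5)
Your proposal is correct and follows the paper's route. The paper's proof is exactly your Step 3: the comparison lemma applied to $\dot V\le-\alpha_V V+\sigma$, followed by the lower bounds $V\ge\tfrac{\underline m}{2}\|e\|^2$ and $V\ge\tfrac{\lambda_\psi}{2\gamma_{\max}}\big(\|\tilde W_\mathrm{a}\|_F^2+\|\tilde W_\mathrm{s}\|_F^2\big)$. Your Steps 1--2 spell out the passage from \eqref{e:Vdot_bounds} to \eqref{e:Vdot_V} that the paper only sketches before the theorem. They are slightly sharper than the paper in two ways: you justify the upper bound on $V$ via the explicit Hessian bound $L_\psi=\beta\max_j\xi_j+\alpha/\sqrt{\varepsilon}$ rather than property~(a), and you correctly note that positivity of the $\|e\|^2$ coefficient is a condition on $\lambda$ versus $c_1,c_\mathrm{a},c_\mathrm{s}$, which cannot be enforced by enlarging $\sigma_\mathrm{a},\sigma_\mathrm{s}$.
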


\begin{pf}
	From~\eqref{e:Vdot_V} and the assumptions, we obtain
	\begin{equation*}
		V(t) \le V(0)\,e^{-\alpha_V t} + \frac{\sigma}{\alpha_V},
	\end{equation*}
	so $V(t)$ is bounded and converges exponentially to the residual set $\{V\le \sigma/\alpha_V\}$. Since $V \ge \tfrac{\underline m}{2}\|e\|^2$, the state estimation error $e(t)$ is uniformly ultimately bounded with exponential convergence. By property~(a) of $\psi_\mathrm{EN}$, $V \ge \tfrac{\lambda_\psi}{2\gamma_{\max}}\big(\|\tilde W_\mathrm{a}\|_F^2 + \|\tilde W_\mathrm{s}\|_F^2\big)$, so $\tilde W_\mathrm{a}(t)$ and $\tilde W_\mathrm{s}(t)$ are also uniformly ultimately bounded. \qed
\end{pf}

\section{Numerical Case Study}
We consider a $3$-axis attitude control system with loss of effectiveness actuator and sensor faults for a spacecraft with four reaction wheels in a pyramid configuration. If the $i$-th and $j$ actuator and sensor are faulty, $g_i^\mathrm{f}(x)=g_i(x)$, $f_{\mathrm{a},i}=(\eta_{\mathrm{a},i}-1)u_i$, the $j$-th row of $e_j$ is equal to $1$, and $f_{\mathrm{s},j}(t)$ is a time-varying additive sensor fault signal, where $\eta_{\mathrm{a},i}$ is the $i$-th actuator effectiveness. The spacecraft inertia matrix is given by $I=\mathrm{diag}(1,1,0.8)$ ($\mathrm{kg}\cdot \mathrm{m}^2$), each reaction wheel has an inertia of $J_\omega=0.01$ ($\mathrm{kg}\cdot \mathrm{m}^2$), and the maximum deliverable torque per wheel is limited to $0.14$ ($\mathrm{N}\cdot \mathrm{m}$). The parameters in (\cite{lee2017geometric}) are used as the spacecraft and reaction wheels parameters. Gains of a nominal controller $u_\mathrm{nom}$ are tuned as $K_p=\mathrm{diag}(22.5,18,15)$ and $K_d=\mathrm{diag}(12,9,7.5)$. Let $\theta=[\theta_1,\, \theta_2, \, \theta_3]^\top \in \mathbb{R}^3$ ($\mathrm{rad}$) where $\theta_1$, $\theta_2$, and $\theta_3$ are roll, pitch, and yaw angles of the spacecraft, respectively. We define the state estimation errors between the system \eqref{e:sys} and the observer \eqref{e:observer} for roll, pitch, and yaw as $e_{\theta_1}$, $e_{\theta_2}$, and $e_{\theta_3}$, respectively. The trained neural networks $\hat f_\mathrm{a}(\hat x,u)$ and $\hat f_\mathrm{s}(\hat x,u)$ have $4$ layers with $({\theta}, \dot{{\theta}}, u)$ as their input.

We consider the case of actuator faults, where the effectiveness of wheels $2$ and $3$, starting from $t=20$ to $50$ (s) and $t=10$ to $35$ (s), become $\eta_{\mathrm{a},2}=0.25$ and $\eta_{\mathrm{a},3}=0.5$, respectively. As can be seen in Fig.~\ref{fig:estimationError}, our hybrid observer using mirror descent (MD) adaptive laws has a small estimation error for all angles. As a comparison, we have shown the results of a gradient descent (GD) adaptive law with a quadratic loss in Fig.~\ref{fig:estimationError}, which demonstrates similar performance to the MD-based approach. However, in the case of fault parameter estimation, as shown in Fig.~\ref{fig:actFaultEst}, the MD-based approach has a more accurate estimate of actuator faults. As for the case of simultaneous actuator and sensor faults, we considered the previous actuator fault profiles and additive sinusoidal sensor faults $f_{\mathrm{s},1}(t)=0.035\sin(0.6\pi(t-15))$ on the roll channel over $t\in[15,50]$ (s) and $f_{\mathrm{s},2}(t)=-0.030\cos(0.5\pi(t-20))$ on the pitch channel over $t\in[20,55]$ (s), while the yaw sensor remains healthy. As shown in Fig.~\ref{fig:senFaultEst}, the MD-based method tracks the active roll and pitch sensor faults more closely than GD and keeps the healthy yaw-channel estimate near zero, whereas the GD estimate drifts away from zero.

\begin{figure}
	\begin{center}
			\includegraphics[width=8.4cm]{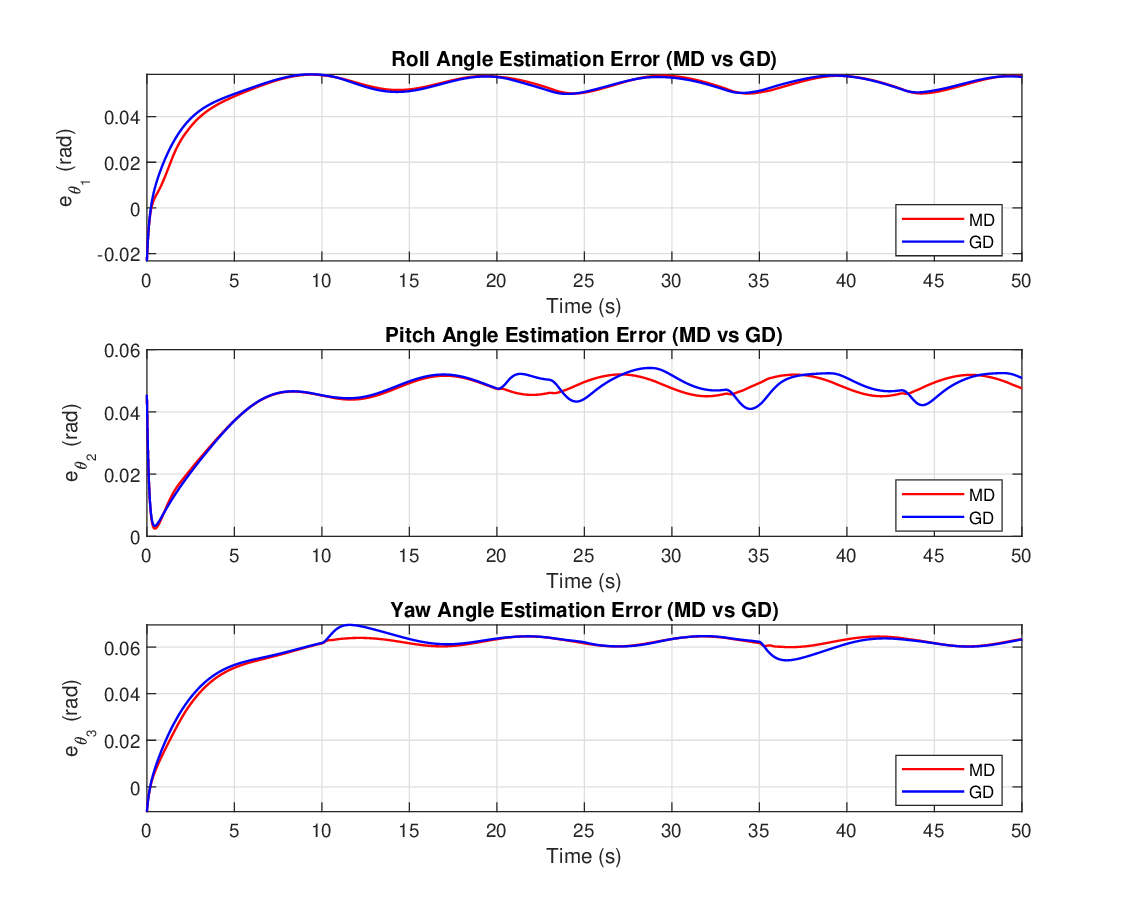}    
			\caption{State estimation error using gradient descent (GD) and mirror descent (MD).} 
			\label{fig:estimationError}
		\end{center}
\end{figure}

\begin{figure}
	\begin{center}
			\includegraphics[width=8.4cm]{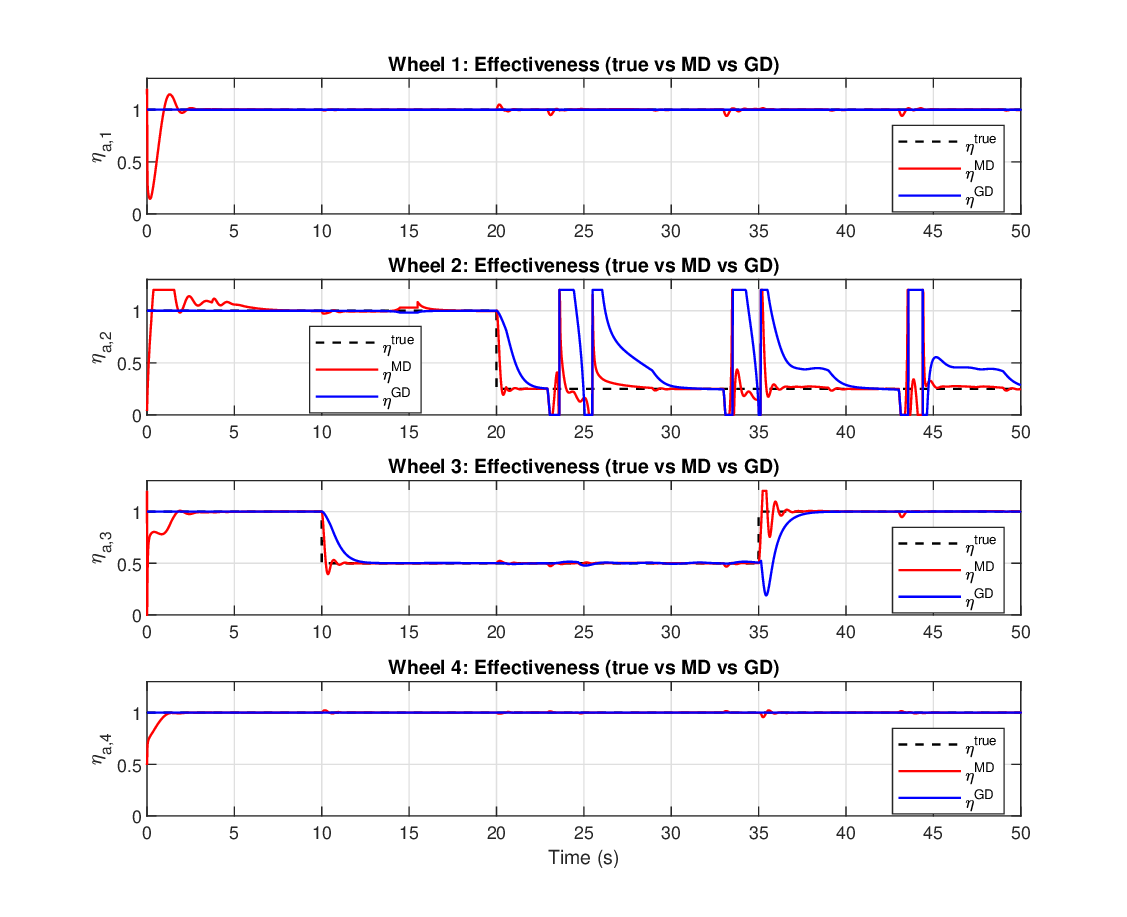}    
			\caption{Actuator loss of effectiveness fault estimation using GD and MD.} 
			\label{fig:actFaultEst}
		\end{center}
\end{figure}

\begin{figure}
	\begin{center}
			\includegraphics[width=7.4cm]{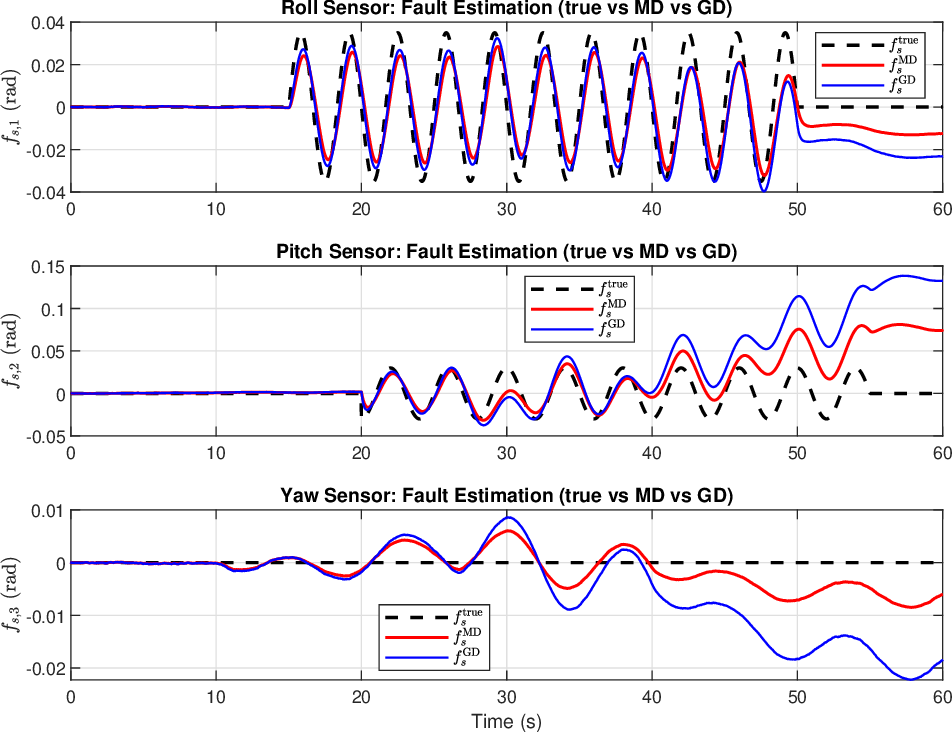}    
			\caption{Sensor fault estimation under simultaneous actuator and sensor faults using GD and MD.} 
			\label{fig:senFaultEst}
		\end{center}
\end{figure}
%
%

\section{Conclusion}
In this paper, we presented a geometric framework for fault detection and identification (FDI) in nonlinear control-affine systems subject to simultaneous actuator and sensor faults. By representing fault effects as signature subspaces in the output differential map, we derived necessary and sufficient isolability conditions in terms of intersections and minimal principal angles between fault and residue subspaces. In order to accurately estimate the system state and fault parameters, we designed a contraction-based Luenberger-like observer augmented with neural network fault estimators whose last layers are adapted online. We utilized a mirror descent-based approach with an elastic net mirror map. Consequently, a Lyapunov analysis combining the contraction metric and Bregman divergences established uniform ultimate boundedness (UUB) of both the state and parameter estimation errors in the presence of bounded disturbances and approximation errors. A numerical case study on a $3$-axis spacecraft attitude control system with loss of effectiveness actuator and sensor faults demonstrated that the proposed method can accurately detect, isolate, and estimate multiple concurrent faults.

                                                   
\end{document}